\newcommand\louwe[1]{\todo[color=green!30,size=\small,inline]{Louwe: #1}}
\newcommand\rustam[1]{\todo[color=blue!30,size=\small,inline]{Rustam: #1}}
\theoremstyle{definition}
\newtheorem{definition}{Definition}[section]
\newtheorem{lemma}{Lemma}
\newtheorem{theorem}{Theorem}
\newtheorem{corollary}{Corollary}
\newtheorem{remark}{Remark}
\title{Satisfiability of Arbitrary Public Announcement Logic with Common Knowledge is $\Sigma^1_1$-hard}
\author{Rustam Galimullin
\institute{University of Bergen\\ Bergen, Norway}
\email{rustam.galimullin@uib.no}
\and
Louwe B. Kuijer
\institute{University of Liverpool\\
Liverpool, UK}
\email{lbkuijer@liverpool.ac.uk}
}
\begin{document}
\maketitle

\begin{abstract}
Arbitrary Public Announcement Logic with Common Knowledge (APALC) is an extension of Public Announcement Logic with common knowledge modality and quantifiers over announcements. We show that the satisfiability problem of APALC on $S5$-models, as well as that of two other related logics with quantification and common knowledge, is $\Sigma^1_1$-hard. This implies that neither the validities nor the satisfiable formulas of APALC are recursively enumerable. Which, in turn, implies that APALC is not finitely axiomatisable.
\end{abstract}

\section{Introduction}



\textbf{Quantified Public Announcement Logics}. 
\textit{Epistemic logic} (EL) \cite{meyer95} is one of the better-known formalisms for reasoning about knowledge of agents in multi-agent systems. It extends the language of propositional logic with constructs $\square_a \varphi$ meaning that `agent $a$ knows $\varphi$'. Formulas of EL are interpreted on epistemic models (or, equivalently, $S5$-models) that comprise a set of states, equivalence relations for each agent between states, and a valuation function that specifies in which states propositional variables are true. 
However, EL provides only a static description of distribution of knowledge in a system. Extensions of the logic that  allow one to reason about how information of individual agents and groups thereof changes as a result of some epistemic event are generally collectively known as \textit{dynamic epistemic logics} (DELs) \cite{del}.

The prime example of a DEL and arguably the most well-studied logic in the family is \textit{public announcement logic} (PAL) \cite{plaza89}. A public announcement is an event of all agents publicly and simultaneously receiving the same piece of information. 
The language of PAL extends that of EL with formulas $[\psi]\varphi$ that are read as `after public announcement of $\psi$, $\varphi$ is true'. 

Quantification over various epistemic actions, and in particular over public announcements, has been explored in the last 15 or so years \cite{vanditmarsch23}. Adding quantification over public announcements allows one to shift the emphasis from the effects of a particular announcement to the question of (non-)existence of an announcement leading to a desired epistemic goal. In this paper, we focus on the three, perhaps most well-known, \textit{quantified PALs} (QPALs). The first of the three is \textit{arbitrary PAL} (APAL) \cite{balbiani08} that extends the language of PAL with constructs $[!]\varphi$ meaning `after \textit{any} public announcement, $\varphi$ is true'. A formula with the dual existential quantifier $\langle ! \rangle \varphi$ is read as `\textit{there is} a public announcement, after which $\varphi$ is true'. 

Observe that quantifiers of APAL do not specify whether an announcement can be made by any of the agents, or groups thereof, modelled in a system. Hence, a more `agent-centric' quantified PAL was proposed. \textit{Group announcement logic} (GAL) \cite{agotnes10} extends the language of PAL with formulas $[G]\varphi$ meaning `after \textit{any} announcement by agents from group $G$, $\varphi$ is true'. A formula with the dual of the universal GAL quantifier is $\langle G \rangle \varphi$ that is read `\textit{there is} an announcement by agents from group $G$ that makes $\varphi$ true'. 

Once we start reasoning about what groups of agents can achieve by making public announcements, it is only too natural to consider their abilities in a game-theoretic setting. In particular, we may let agents outside of the group make their own announcements in an attempt to preclude the group from reaching their epistemic goals. A QPAL with such a competitive flavour to it is called \textit{coalition announcement logic} (CAL) \cite{agotnes08,thesis}. The logic extends PAL with modalities $[\!\langle G \rangle \!] \varphi$ that are read as `\textit{whatever} agents from coalition $G$ announce, \textit{there is} a counter-announcement by the anti-coalition that makes $\varphi$ true'. The diamond version $\langle \! [ G ] \! \rangle \varphi$ is then means that `\textit{there is} an announcement by coalition $G$, such that \textit{whatever} the anti-coalition announces at the same time, they cannot avoid $\varphi$'. 
Observe, that compared to APAL and GAL, modalities of CAL contain double quantification: $\forall \exists$ and $\exists \forall$ correspondingly. As the name of the logic suggests, modalities of CAL were inspired by coalition logic \cite{pauly02}, and they capture game-theoretic notions of $\alpha$- and $\beta$-effectivity \cite{aumann61}.

\textbf{Some Logical Properties of QPALs}. One of the most pressing open problems in the area is the existence of finitary axiomatisations of QPALs. Both finitary and infinitary axiom systems for APAL were proposed in \cite{balbiani08}, but later the finitary version was shown to be unsound \cite{kuijer15a}. The infinitary axiomatisation is, however, sound and complete \cite{balbiani15}. As the axiomatisation of GAL \cite{agotnes10} is quite similar to that of APAL, its finitary version is also not sound \cite[Footnote 4]{fan16}, and its infinitary version can be shown to be sound and complete by a modification of the proof from \cite{balbiani15}. To the best of our knowledge, there are no known sound and complete proof systems, finitary or infinitary, for CAL\footnote{A complete infinitary axiomatisation with CAL modalities and additional operators was given in \cite{galimullin21}}.

The satisfiability problem for QPALs is known to be undecidable \cite{agotnes16}. The result is achieved by a reduction from the classic tiling problem that consists in answering the question whether a given finite set of tiles can tile the $\mathbb{N} \times \mathbb{N}$ plane. Since this problem is co-RE-complete \cite{berger66,harel85}, or, equivalently, $\Pi^0_1$-complete, the reduction amounts to the fact that the satisfiability problem for QPALs is co-RE-hard (or $\Pi^0_1$-hard). Note that this result does not rule out the existence of finitary axiomatisations of QPALs. 
A prime example of a logic with a co-RE-complete satisfiability problem and a finitary axiomatisation is first-order logic.

\textbf{Overview of the paper and our result.} In this paper we consider extensions of QPALs with \textit{common knowledge} \cite{fagin95}, which is a classic variant of group knowledge in multi-agent systems. Its intuitive meaning is that `$\varphi$ is common knowledge among agents in group $G$ if everyone in $G$ knows $\varphi$, everyone in $G$ knows that everyone in $G$ knows $\varphi$ and so on ad infinitum'. Semantically, common knowledge among agents from $G$ corresponds to the reflexive transitive closure of equivalence relations of all agents from group $G$. We call extensions of APAL, GAL, and CAL with common knowledge APALC \cite{agotnes23}, GALC, and CALC, correspondingly, or QPALCs if we refer to all of them at the same time. 


The result we prove in this paper is that the satisfiability problems for QPALCs are $\Sigma_1^1$-hard. We do this by showing that the \emph{recurring tiling problem}, which is known to be $\Sigma_1^1$-complete \cite{harel86}, can be reduced to satisfiability of QPALC formulas.
Because the satisfiability problems are $\Sigma_1^1$-hard, it follows that, in particular, the set of valid QPALC formulas is not recursively enumerable. That, in turn, implies that QPALCs have no finitary axiomatisations. The non-existence of a finitary axiomatisation of a somewhat related arbitrary arrow update logic \cite{vanditmarsch17} with common knowledge was shown in \cite{kuijer17} by the reduction from the non-halting problem. Moreover, the recurring tiling problem was used in \cite{miller05} to demonstrate that the satisfiability problem of  PAL with iterated announcements and common knowledge is $\Sigma^1_1$-complete.  

The use of common knowledge is instrumental in our paper, since it allows us to have a `tighter' grid than the ones from \cite{agotnes16} and \cite{french08}.  We deem our result important in at least two ways. First, the non-existence of finitary axiomatisations of QPALCs is interesting in its own right as it demonstrates that presence of common knowledge in QPALCs is a sufficient condition for $\Sigma^1_1$-hardness. Second, having both our construction (with common knowledge) and the constructions from \cite{agotnes16} and \cite{french08} side by side, allows one to flesh out crucial differences between $\Sigma^1_1$-hardness and $\Sigma_1^0$
-hardness arguments, and, hopefully, move closer to tackling the open problem of (non-)existence of finitary axiomatisations of QPALs.

\textbf{Outline of the paper.} The rest of the paper is organised as follows. In Section \ref{sec:background} we cover the background on QPALCs. After that, in Section \ref{sec:mainpart}, we prove the main claim of this paper, and, finally, we conclude in Section \ref{sec:concl}.

\section{Quantified Public Announcement Logics with Common Knowledge}
\label{sec:background}
Let $A$ be a finite set of agents, and $P$ be a countable set of propositional variables. 

\begin{definition}
The \emph{languages of arbitrary public announcement logic with common knowledge} $\mathsf{APALC}$, \emph{group announcement logic with common knowledge} $\mathsf{GALC}$, and \emph{coalition announcement logic with common knowledge} $\mathsf{CALC}$ are inductively defined as 
\begin{alignat*}{3}
    &\mathsf{APALC} &&\thinspace \ni && \enspace \varphi ::= p \mid \neg \varphi \mid (\varphi \land \varphi) \mid \square_a \varphi \mid [\varphi]\varphi \mid \blacksquare_G \varphi \mid [!]\varphi\\
    &\mathsf{GALC} &&\thinspace \ni && \enspace \varphi ::= p \mid \neg \varphi \mid (\varphi \land \varphi) \mid \square_a \varphi \mid [\varphi]\varphi \mid \blacksquare_G \varphi \mid [G]\varphi\\
    &\mathsf{CALC} &&\thinspace \ni && \enspace \varphi ::= p \mid \neg \varphi \mid (\varphi \land \varphi) \mid \square_a \varphi \mid [\varphi]\varphi \mid \blacksquare_G \varphi \mid [\!\langle G \rangle \!]\varphi
\end{alignat*}
where $p \in P$, $a \in A$,  and $G \subseteq A$. Duals are defined as $\Diamond_a \varphi := \lnot \square_a \lnot \varphi$, $\langle \psi \rangle \varphi := \lnot [\psi]\lnot\varphi$, $\blacklozenge_G \varphi := \lnot \blacksquare_G \lnot \varphi$, $\langle ! \rangle \varphi := \lnot [!] \lnot \varphi$, $\langle G \rangle \varphi := \lnot [G] \lnot \varphi$ and $\langle \! [ G ] \! \rangle \varphi := \lnot [\! \langle G \rangle \! ] \lnot \varphi$. 

The fragment of $\mathsf{APALC}$ without $[!]\varphi$ is called \emph{public announcement logic with common knowledge} $\mathsf{PALC}$; the latter without $[\varphi]\varphi$ is \emph{epistemic logic with common knowledge} $\mathsf{ELC}$; $\mathsf{PALC}$ and $\mathsf{ELC}$ minus $\blacksquare_G \varphi$ are, correspondingly, \emph{public announcement logic}  $\mathsf{PAL}$ and \emph{epistemic logic} $\mathsf{EL}$. Finally, fragments of $\mathsf{APALC}$, $\mathsf{GALC}$ and $\mathsf{CALC}$ without $\blacksquare_G \varphi$ are called \emph{arbitrary public announcement logic} $\mathsf{APAL}$, \emph{group announcement logic} $\mathsf{GAL}$ and \emph{coalition announcement logic} $\mathsf{CAL}$ respectively.
\end{definition}


\begin{definition}
A \emph{model} $M$ is a tuple $(S, \sim, V)$, where $S$ is a non-empty set of states, $\sim: A \rightarrow 2^{S \times S}$ gives an equivalence relation for each agent, and $V:P \rightarrow 2^S$ is the valuation function. By $\sim_G$ we mean reflexive transitive closure of $\bigcup_{a \in G} \sim_a$.
We will denote model $M$ with a distinguished state $s$ as $M_s$. 

\end{definition}

We would like to stress that agent relations in our models are \textit{equivalence relations} (and hence our models are $S5$ models). The results of this paper do not generalise to arbitrary agent relations in any obvious way.


It is assumed that for group announcements, agents know the formulas they announce. In the following, we write $\mathsf{PALC}^G = \{\bigwedge_{i \in G} \square_i \psi_i \mid \textrm{for all } i \in G, \psi_i \in \mathsf{PALC}\}$ to denote the set of all possible announcements by agents from group $G$. We will use $\psi_G$ to denote arbitrary elements of $\mathsf{PALC}^G$.

\begin{definition} 
\label{def:semantics}
Let $M_s = (S, R, V)$ be a model, $p \in P$, $G \subseteq A$, and $\varphi, \psi \in \mathsf{APALC} \cup \mathsf{GALC} \cup \mathsf{CALC}$.
\begin{alignat*}{3}
	&M_s \models p &&\quad \text{iff} &&\quad s \in V(p)\\
	&M_s \models \lnot \varphi &&\quad \text{iff} &&\quad M_s \not \models \varphi\\
	&M_s \models \varphi \land \psi &&\quad \text{iff} &&\quad M_s \models \varphi \text{ and } M_s \models \psi\\
	&M_s \models \square_a \varphi &&\quad \text{iff} &&\quad \forall t \in S: s \sim_a t \text{ implies } M_t \models \varphi \\
    &M_s \models \blacksquare_G \varphi  &&\quad \text{iff} &&\quad \forall t \in S: s \sim_G t \text{ implies } M_t \models \varphi\\
    &M_s \models [\psi] \varphi &&\quad \text{iff} &&\quad M_s \models \psi \text{ implies } M_s^\psi \models \varphi\\
     &M_s \models [!] \varphi  &&\quad \text{iff} &&\quad \forall \psi \in \mathsf{PALC}: M_s \models [\psi] \varphi\\
     &M_s \models [G] \varphi  &&\quad \text{iff} &&\quad \forall \psi_G \in \mathsf{PALC}^G: M_s \models [\psi_G] \varphi\\
     &M_s \models [\! \langle G\rangle \!] \varphi  &&\quad \text{iff} &&\quad \forall \psi_G \in \mathsf{PALC}^G, \exists \chi_{A \setminus G} \in \mathsf{PALC}^{A\setminus G}: M_s \models \psi_G \text{ implies } M_s \models \langle \psi_G \land \chi_{A \setminus G}\rangle \varphi
\end{alignat*}
where $M_s^\psi = (S^\psi, R^\psi, V^\psi)$ with $S^\psi = \{s \in S \mid M_s \models \psi\}$, $R^\psi (a)$ 
is the restriction of $R(a)$ to $S^\psi$
for all $a \in A$, and $V^\psi(p) = V(p) \cap S^\psi$ for all $p \in P$. 
\end{definition} 

Observe, that it follows from the definition of the semantics that in the case of the grand coalition $A$, $M_s \models [A] \varphi$ if and only if $M_s \models [\!\langle A \rangle \! ]\varphi$. For the case of the empty group $\varnothing$, we assume that the conjunction of an empty set of formulas is a tautology.

\begin{remark}
 For APAL, GAL, and CAL, we assume that quantification ranges over a quantifier-free fragment of the language, i.e. over PAL, which is equally expressive as EL \cite{plaza89}. 
 This is, however, not as straightforward once we consider ELC and PALC. The latter is strictly more expressive than ELC \cite[Theorem 8.48]{del}, and ELC, in its turn, is strictly more expressive than EL, and thus it matters, expressivity-wise, which quantifer-free fragment of a QPALC the quantification ranges over. These matters are explored in \cite{agotnes23}, where also infinitary axiomatisations of APALC and GALC are given. For our current purposes, though, the difference in the range of quantification does not play a role.
\end{remark}

\section{The Satisfiability Problem of QPALCs is $\Sigma^1_1$-hard}
\label{sec:mainpart}
We prove the $\Sigma^1_1$-hardness of the satisfiability problem of QPALCs via a reduction from the recurring tiling problem \cite{harel85}. 

\begin{definition}
    Let $C$ be a finite set of \emph{colours}. A \emph{tile} is a function $\tau:\{\mathsf{north}, \mathsf{south}, \mathsf{east}, \mathsf{west}\} \to C$. A finite set of tiles $\mathrm{T}$ is called an \emph{instance} of the tiling problem. A \emph{solution} to an instance of the tiling problem is a function\footnote{Throughout the paper we assume that $0 \in \mathbb{N}$.} $f:\mathbb{N} \times \mathbb{N} \to \mathrm{T}$ such that for all $(i,j) \in \mathbb{N} \times \mathbb{N}$,
    \begin{align*}
    f(i,j) (\mathsf{north}) = f(i,j+1) (\mathsf{south}) \text{ and } 
    f(i,j) (\mathsf{east}) = f(i+1,j) (\mathsf{west}). 
\end{align*}
\end{definition}

\begin{definition}
    Let $\mathrm{T}$ be a finite set of tiles with a designated tile $\tau^\ast \in \mathrm{T}$. The \emph{recurring tiling problem} is the problem to determine whether there is a solution to instance $\mathrm{T}$ of the tiling problem such that $\tau^\ast$ appears \textit{infinitely} often in the first column. 
\end{definition}
We assume without loss of generality that the designated tile $\tau^\ast$ occurs only in the first column.

\subsection{Encoding a Tiling}

For our construction we will require five propositional variables --- \textsf{north}, \textsf{south}, \textsf{east}, \textsf{west} and \textsf{centre} --- to designate the corresponding sides of tiles. Additionally, we will have designated propositional variables for each colour in $C$, and for each tile $\tau_i \in \mathrm{T}$ there is a propositional variable $p_i$ that represents this tile. Finally, we will use $p^\ast$ for the special $\tau^\ast$. 

In our construction, we will represent each tile with (at least) five states: one for each of the four sides of a tile, and one for the centre. As for agents, we require only three of them for our construction. Agent $s$, for \textit{s}quare, cannot distinguish states within the same tile. Agent $v$, for \textit{v}ertical, cannot distinguish between the northern part of one tile and the southern part of the tile above. Similarly, the \textit{h}orizontal agent $h$ cannot distinguish between the eastern and western parts of adjacent tiles. See Figure \ref{fig:tiling} for the depiction of an intended grid-like model.

\begin{figure}[h!]
\centering
\begin{tikzpicture}
\fill (-1,-1) circle (0.1) node[below]{\footnotesize{\{$\mathsf{west}, c_2\}$}};
\fill (0.5, -1) circle (0.1) node[below]{\footnotesize{\{$\mathsf{centre}\}$}};
\fill (0.5, 0.5) circle (0.1) node[below]{\footnotesize{\{$\mathsf{north}, c_1\}$}};
\fill (2,-1) circle (0.1) node[below]{\footnotesize{\{$\mathsf{east}, c_4\}$}};
\fill (0.5,-2.5) circle (0.1) node[above]{\footnotesize{\{$\mathsf{south}, c_3\}$}};
\draw[dashed, rounded corners] (-1.75,-2.75) rectangle (2.75,0.75);
\fill (0.5, 1.5) circle (0.1) node[above]{\footnotesize{\{$\mathsf{south}, c_1\}$}};
\draw[thick] (0.5, 0.5) -- node[right]{$v$} (0.5, 1.5);
\fill (0.5, -3.5) circle (0.1) node[below]{\footnotesize{\{$\mathsf{north}, c_3\}$}};
\draw[thick] (0.5, -2.5) -- node[left]{$v$} (0.5, -3.5);
\fill (-3, -1) circle (0.1) node[below]{\footnotesize{\{$\mathsf{east}, c_2\}$}};
\draw[thick] (-1,-1) -- node[above]{$h$} (-3, -1);
\fill (4,-1) circle (0.1) node[below]{\footnotesize{\{$\mathsf{west}, c_4\}$}};
\draw[thick] (2,-1) -- node[below]{$h$} (4,-1);
\node (tile) at (-1.25, 0.25) {\Large{$\tau_i$}};

\draw[dashed, rounded corners] (-1.75, 2) -- (-1.75,1.25) -- (2.75,1.25) -- (2.75, 2);
\draw[dashed, rounded corners] (-3, 0.75) -- (-2.25, 0.75) -- (-2.25, -2.75) -- (-3, -2.75);
\draw[dashed, rounded corners] (4, 0.75) -- (3.25, 0.75) -- (3.25, -2.75) -- (4, -2.75);
\draw[dashed, rounded corners] (-1.75, -4) -- (-1.75,-3.25) -- (2.75,-3.25) -- (2.75, -4);
\end{tikzpicture}
\hspace{1cm}
\begin{tikzpicture}
\node[draw, minimum size = 20pt] (tile00) at (0,1) {$\tau_i$};
\node[draw, minimum size = 20pt] (tile01) at (0,2.5) {$\tau_j$};
\node[draw, minimum size = 20pt] (tile02) at (0,4) {$\tau_k$};
\node (tile03) at (0,5) {};
\draw[thick] (tile00) -- node[left]{$v$} (tile01);
\draw[thick] (tile01) -- node[left]{$v$} (tile02);
\draw[thick] (tile02) -- node[left]{$v$} (tile03);

\node[draw, minimum size = 20pt] (tile10) at (1.5,1) {$\tau_k$};
\node[draw, minimum size = 20pt] (tile20) at (3,1) {$\tau_j$};
\node (tile30) at (4,1) {};
\draw[thick] (tile00) -- node[above]{$h$} (tile10);
\draw[thick] (tile10) -- node[above]{$h$} (tile20);
\draw[thick] (tile20) -- node[above]{$h$} (tile30);

\node[draw, minimum size = 20pt] (tile11) at (1.5,2.5) {$\tau_i$};
\node[draw, minimum size = 20pt] (tile21) at (3,2.5) {$\tau_k$};
\node (tile31) at (4,2.5) {};
\draw[thick] (tile11) -- node[above]{$h$} (tile01);
\draw[thick] (tile21) -- node[above]{$h$} (tile11);
\draw[thick] (tile21) -- node[above]{$h$} (tile31);
\draw[thick] (tile11) -- node[left]{$v$} (tile10);
\draw[thick] (tile21) -- node[left]{$v$} (tile20);

\node[draw, minimum size = 20pt] (tile12) at (1.5,4) {$\tau_j$};
\node[draw, minimum size = 20pt] (tile22) at (3,4) {$\tau_i$};
\node (tile33) at (4,4) {};
\draw[thick] (tile02) -- node[above]{$h$} (tile12);
\draw[thick] (tile22) -- node[above]{$h$} (tile12);
\draw[thick] (tile22) -- node[above]{$h$} (tile33);
\draw[thick] (tile12) -- node[left]{$v$} (tile11);
\draw[thick] (tile22) -- node[left]{$v$} (tile21);

\node (tile13) at (1.5,5) {};
\node (tile23) at (3,5) {};
\draw[thick] (tile13) -- node[left]{$v$} (tile12);
\draw[thick] (tile23) -- node[left]{$v$} (tile22);

\node (shadow) at (0.5, -0.5) {};
\end{tikzpicture}
\caption{Left: a representation of a single tile $\tau_i$, where agent $s$ has the universal relation within the dashed square, relations $h$ and $v$ are equivalences, and reflexive arrows are omitted. Each state is labelled by a set of propositional variables that are true there. Right: an example of a grid-like model that we construct in our proof. Each tile $\tau$ has a similar structure as presented on the left of the figure.}
\label{fig:tiling}
\end{figure}
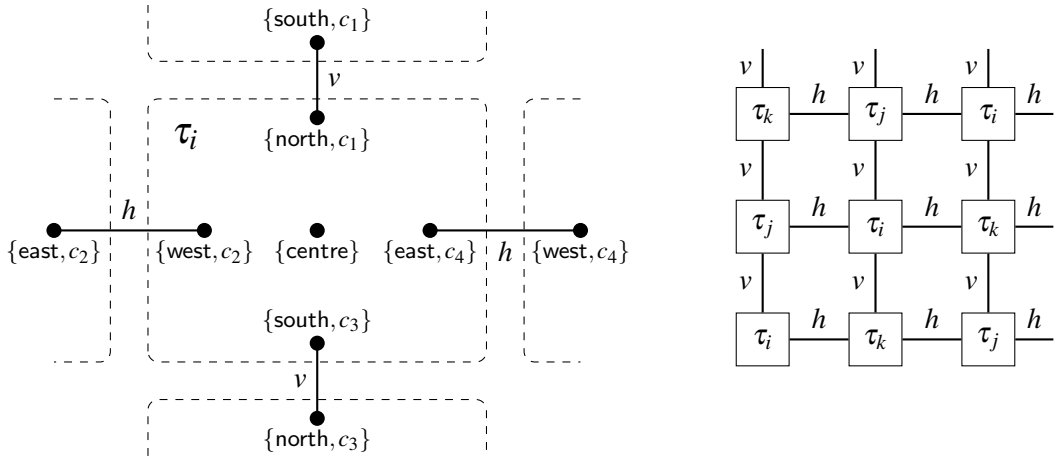

Let an instance $\mathrm{T}$ of the recurring tiling problem be given. We start by construction of formula $\Psi_{\mathrm{T}}$ that will be satisfied in a given model if and only if the model is grid-like.
We will build up $\Psi_{\mathrm{T}}$ step-by-step, defining useful subformulas along the way. Let $\mathsf{Position}$ be the following set $\mathsf{Position} := \{\mathsf{north},$ $\mathsf{south},$ $\mathsf{east},$ $\mathsf{west}, \mathsf{centre}\}$.

The first constraint, expressed by formula $\mathit{one\_colour}$, is that each state is coloured by exactly one colour. To ensure that all five parts --- north, south, east, west, and centre --- are present in a current square, we state in $\mathit{all\_parts}$ that in all squares the square agent $s$ has access to all five  relevant states.
\begin{align*}
    \mathit{one\_colour} := &\bigvee_{c \in C} \left(c \land \bigwedge_{d \in C \setminus \{c\}} \lnot d\right) &\mathit{all\_parts} := &\square_s \bigvee_{{q \in \mathsf{Position}}} q \land \bigwedge_{q \in \mathsf{Position}} \Diamond_s q
\end{align*}
The formulas $\mathit{hor}$ and $\mathit{vert}$ state that the relation $h$ only allows us to move between $\mathsf{east}$ and $\mathsf{west}$ states, while $v$ only allows movement between $\mathsf{north}$ and $\mathsf{south}$ states.
\begin{align*}
    \mathit{hor} := & {} \bigwedge_{q\in \{\mathsf{north},\mathsf{south},\mathsf{centre}\}} (q\rightarrow \square_h q) &
    \mathit{vert} := & {} \bigwedge_{q\in \{\mathsf{east},\mathsf{west},\mathsf{centre}\}} (q\rightarrow \square_v q)
\end{align*}
With $\mathit{one\_pos}$ we force each state to satisfy exactly one propositional variable from $\mathsf{Position}$, and with $\mathit{one\_tile}$ we ensure that all states within the same tile are labelled by the tile proposition.
\begin{align*}
    \mathit{one\_pos} := &\bigvee_{q \in \mathsf{Position}} \left(q \land \bigwedge_{q^\prime \in \mathsf{Position}\setminus\{q\}} \lnot q^\prime\right)
    & \mathit{one\_tile} := &\bigvee_{\tau_i \in \mathrm{T}} \left(p_i \land  \square_{s} p_i \land \bigwedge_{\tau_j \in \mathrm{T} \setminus \{\tau_i\}} \lnot p_j\right)
\end{align*}
Next, we force each state in a tile to satisfy exactly one atom corresponding to their designated colour:
\[\mathit{state\_col} := \bigvee_{\tau_i \in \mathrm{T}} \left(p_i \to \bigwedge_{q \in \mathsf{Position}\setminus \{\mathsf{centre}\}} (q \to \tau_i(q))\right), \]
where $\tau_i(q)$ is the colour of the tile $\tau_i$ on the side $q$ (e.g. $\tau_i (\mathsf{south})$ is the bottom colour of tile $\tau_i$).

All the formulas considered so far deal with the representation of a single tile. We will use the following abbreviation:
\[\psi_{\mathit{tile}} := \mathit{one\_colour} \land \mathit{all\_parts} \land \mathit{hor} \land \mathit{vert} \land \mathit{one\_pos} \land \mathit{one\_tile} \land \mathit{state\_col}\]

Adjoining tiles are required to have the same colour on the sides facing each other, we simulate this by requiring that agents $h$ and $v$ consider a current colour in the top and right directions. In such a way we also ensure that the grid is infinite in the positive quadrant.
\[\mathit{adj\_tiles} :=  \bigwedge_{c \in C} \left(
     (\mathsf{north} \land c \to \Diamond_v  \mathsf{south}  \land \square_v c) \land
      (\mathsf{east} \land c \to \Diamond_h \mathsf{west}  \land \square_h c)
    \right)
 \]
We are concerned with the reduction from the $\mathbb{N} \times \mathbb{N}$ recurring tiling problem, i.e. our grid will have left and bottom edges. We force the existence of a tile at position $(0,0)$ with the following formula:
\begin{align*}\mathit{init} := & {} \blacklozenge_{\{h,v,s\}} (\blacksquare_{\{v,s\}}(\mathsf{west}\rightarrow \square_h\mathsf{west}) \land \blacksquare_{\{h,s\}}(\mathsf{south}\rightarrow \square_v\mathsf{south}))
\end{align*}

For the remaining formulas, it is useful to define two abbreviations. We use $\square_\mathit{up}\varphi$ to denote $\square_s (\mathsf{north}\rightarrow \square_v(\mathsf{south}\rightarrow \varphi))$, i.e., we first move, by agent $s$, to the state representing the northern quadrant of the tile, then we move, by agent $v$, to southern quadrant of the tile above, where we evaluate $\varphi$. Similarly, we use $\square_\mathit{right}\varphi$ to denote $\square_s(\mathsf{east}\rightarrow\square_h(\mathsf{west}\rightarrow \varphi))$. The duals $\lozenge_\mathit{up}$ and $\lozenge_\mathit{right}$ are defined as usual.


The next two formulas are used to guarantee that for every tile there are unique tiles, up to PALC-indistinguishability, above it and to its right.
\begin{align*}
    \mathit{up} := & [!](\lozenge_\mathit{up}\lozenge_s\mathsf{centre}\rightarrow\square_\mathit{up}\lozenge_s\mathsf{centre})\\
    \mathit{right} := & [!](\lozenge_\mathit{right}\lozenge_s\mathsf{centre}\rightarrow\square_\mathit{right}\lozenge_s\mathsf{centre})
\end{align*}
Additionally, we use the following two formulas to establish a commutative property: going \textit{up} and then \textit{right} results in a state that is PALC-indistinguishable from going \textit{right} and then \textit{up}.

\begin{align*}
    \mathit{right\&up} := [!](\lozenge_\mathit{right}\lozenge_\mathit{up}\lozenge_s\mathsf{centre}\rightarrow \square_\mathit{up}\square_\mathit{right}\lozenge_s\mathsf{centre})\\
    \mathit{up\&right} := [!](\lozenge_\mathit{up}\lozenge_\mathit{right}\lozenge_s\mathsf{centre}\rightarrow \square_\mathit{right}\square_\mathit{up}\lozenge_s\mathsf{centre})
\end{align*}

Finally, we make sure that any two states that are $h$ or $v$ related and that are in the same position are parts of indistinguishable tiles.

\begin{align*}
    \mathit{no\_change} := & {} \bigwedge_{q,q'\in \mathsf{Position}}[!]((q\wedge\lozenge_s q')\rightarrow (\square_h (q\rightarrow \lozenge_s q')\wedge \square_v (q\rightarrow\lozenge_sq')))
\end{align*}
The formula $\mathit{hor}$ states that unless we are in a $\mathsf{east}$ or $\mathsf{west}$ position, we cannot go to a different position using $h$. Similarly, $\mathit{vert}$ states that unless we are in a $\mathsf{north}$ or $\mathsf{south}$ position we can't use $v$ to change position. The formula $\mathit{no\_change}$ then states that any move by relation $h$ or $v$ that does not change the position must lead to an indistinguishable tile.



We abbreviate formulas with quantifiers as 
\[\psi_{x\&y}:= \mathit{up} \land \mathit{right} \land \mathit{right\&up} \land \mathit{up\&right} \land \mathit{no\_change}\]
In our reduction, we are interested in grids where a special tile appears infinitely often in the first column of the grid. 
The following formula requires that the special tile appears only in the leftmost column:
\[\mathit{tile\_left} := p^\ast \to \square_s(\mathsf{west} \to \square_{h} \mathsf{west}) \]

All of this completes the necessary requirements for the grid. Now, by adding a common knowledge modality for all agents, we force all of the aforementioned formulas to hold everywhere in the grid.
\[\Psi_{\mathrm{T}} := \blacksquare_{\{h,v,s\}} \left( 
     \psi_{\mathit{tile}} \land
      \mathit{adj\_tiles}\land
      \mathit{init}\land
      \psi_{\mathit{x\&y}}\land
      \mathit{tile\_left}
\right)
\]
Observe that $\Psi_{\mathrm{T}}$ does not say anything about the special tile $\tau^\ast$ appearing infinitely often in the first column. The formula merely requires that if there is a special tile, then it should appear in the first column. We first show that $\Psi_\mathrm{T}$ forces a grid-like model, and only after that will we consider the (in)finite number of occurrences of the special tile. 

\begin{lemma}
\label{lem:there}
Let $\mathrm{T}$ be an instance of the recurring tiling problem. If $\mathrm{T}$ can tile $\mathbb{N} \times \mathbb{N}$, then $\Psi_{\mathrm{T}}$ is satisfiable.
\end{lemma}

\begin{proof}
Assume that there is a tiling of the $\mathbb{N} \times \mathbb{N}$ plane with a finite set of tiles $\mathrm{T}$. We construct model $M = (S, \sim, V)$ satisfying $\Psi_{\mathrm{T}}$ directly from the given tiling. In particular, 
\begin{itemize}
    \item $S = \mathbb{N} \times \mathbb{N} \times \{\mathfrak{n}, \mathfrak{s}, \mathfrak{e}, \mathfrak{w}, \mathfrak{c}\}$, 
    \item $\sim_s = \{(i,j,\mathfrak{l}), (i',j',\mathfrak{l}') \mid i = i' \text{ and } j = j'\}$
    \item $\sim_v$ is the reflexive closure of $\{(i,j,\mathfrak{n}), (i, j+1, \mathfrak{s})\}$
    \item $\sim_h$ is the reflexive closure of $\{(i,j,\mathfrak{e}), (i+1, j, \mathfrak{w})\}$
    \item for all $\tau_k \in \mathrm{T}$, $V(p_k) = \{(i,j,\mathfrak{l}) \mid \tau_k \text{ is at } (i,j)\}$
    \item for all $c \in C$, $V(c) = \{(i,j,\mathfrak{l}) \mid \tau(\mathfrak{l})=c\}$
    \item for all $l \in \mathsf{Position}$, $V(l) = \{(i,j,\mathfrak{l}) \mid l \text{ corresponds to } \mathfrak{l}\}$
\end{itemize}

To argue that $M_{(0,0,\mathfrak{e})} \models \Psi_{\mathrm{T}}$ we first notice that due to the fact that $\mathrm{T}$ tiles the $\mathbb{N} \times \mathbb{N}$ plane and by the construction of $M$, subformulas of $ \Psi_{\mathrm{T}}$ that do not involve arbitrary announcements are straightforwardly satisfied. 

Now, consider the formula $\mathit{up}$. For every $(i,j,\mathfrak{l})$, there is at most one $(i',j',\mathfrak{l}')$ that is reachable by taking an $s$-step to a $\mathsf{north}$ state followed by a $v$-step to a $\mathsf{south}$ state, namely $(i',j',\mathfrak{l}')=(i,j+1,\mathfrak{s})$. Furthermore, this property is retained in any submodel of $M$. As a consequence, in any state of any submodel of $M$,  $\lozenge_\mathit{up}\chi$ implies $\square_\mathit{up}\chi$, for every $\chi$. In particular, it follows that $M_{(i,j,\mathfrak{l})}\models [!](\lozenge_\mathit{up}\lozenge_s\mathsf{centre}\rightarrow \square_\mathit{up}\lozenge_s\mathsf{centre})$, i.e., $M_{(i,j,\mathfrak{l})}\models \mathit{up}$.

Similar reasoning shows that $(i,j,\mathfrak{l})$ satisfies the other conjuncts of $\psi_{x\&y}$. Hence $M_{(i,j,\mathfrak{l})}\models \psi_{\mathit{tile}} \land
      \mathit{adj\_tiles}\land
      \mathit{init}\land
      \psi_{\mathit{x\&y}}\land
      \mathit{tile\_left}$, for all $(i,j,\mathfrak{l})$, and thus $M_{(0,0,\mathfrak{e})}\models\Psi_\mathrm{T}$.
\end{proof}
The more complex part of the reduction is to show that if $\Psi_\mathrm{T}$ is satisfiable, then a tiling exists.
\begin{lemma}
\label{lem:and_back}
    Let $\mathrm{T}$ be an instance of the recurring tiling problem. If $\Psi_{\mathrm{T}}$ is satisfiable, then $\mathrm{T}$ can tile $\mathbb{N} \times \mathbb{N}$.
\end{lemma}
\begin{proof}
    Let $M$ be such that $M_s\models \Psi_\mathrm{T}$. The model $M$ is partitioned by $\sim_s$, we refer to these partitions as grid points, and label these points as follows.
    \begin{itemize}
        \item The grid point containing $s$ is labelled $(0,0)$.
        \item If $A$ and $B$ are grid points, $A$ is labelled $(i,j)$ and there is a $\mathsf{north}$-state in $A$ that is $v$-indistinguishable to a $\mathsf{south}$-state in $B$, then $B$ is labelled $(i,j+1)$.
        \item If $A$ and $B$ are grid points, $A$ is labelled $(i,j)$ and there is a $\mathsf{east}$-state in $A$ that is $h$-indistinguishable to a $\mathsf{west}$-state in $B$, then $B$ is labelled $(i+1,j)$.
    \end{itemize}
    Note that a single grid point might have multiple labels.
    We say that $(i,j)$ is tiled with $\tau_i$ if there is some grid point labelled with $(i,j)$ that contains a state where $p_i$ holds. We start by noting that because the main connective of $\Psi_\mathrm{T}$ is $\blacksquare_{\{h,v,s\}}$, the formula holds in every labelled grid point. For every labelled grid point $X$ and every $x\in X$, we therefore 
    have $M_x\models \psi_\mathit{tile}$. So $X$ contains states for every direction, each labelled with exactly one colour that corresponds to the tile that holds on $X$. We continue by proving the following claim.

    \vspace{10pt}
    \textbf{Claim 1:} Let $X$, $A$ and $B$ be grid points where $X$ is labeled $(i,j)$ while $A$ and $B$ are both labeled $(i,j+k)$ by virtue of being $k$-steps to the north of $X$. Then $A$ and $B$ are PALC-indistinguishable, in the sense that for every $\chi\in \mathsf{PALC}$, if there is an $a\in A$ such that $M_a\models \chi$ then there is a $b\in B$ such $M_b\models \chi$ (and vice versa).

    \textbf{Proof of Claim 1:} By induction on $k$. As base case, let $k=1$ and suppose towards a contradiction that, for some $\chi\in \mathsf{PALC}$ and $a \in A$, $M_a\models \chi$ while for every $b\in B$, $M_b\not\models \chi$. Consider then the formula $\mathsf{centre}\rightarrow \lozenge_s\chi$. Every $\mathsf{centre}$ state in $A$ satisfies this formula, while none of the $\mathsf{centre}$ states in $B$ do. Hence, for every state $x\in X$, $M_x\models [\mathsf{centre}\rightarrow\lozenge_s\chi](\lozenge_\mathit{up}\lozenge_s\mathsf{centre}\land \neg \square_\mathit{up}\lozenge_s\mathsf{centre})$. But that contradicts $M_x\models \mathit{up}$. From this contradiction, we prove the base case $k=1$.

    Now, suppose as induction hypothesis that $k>1$ and that the claim holds for all $k'<k$. Again, suppose towards a contradiction that $M_a\models \chi$ while $M_b\not\models \chi$ for all $b\in B$. Let $A'$ and $B'$ be grid points that lie $k-1$ steps to the north of $X$ and one step to the south of $A$ and $B$, respectively. Then for every $a'\in A'$ and $b'\in B'$, $M_{a'}\models \lozenge_\mathit{up}\lozenge_s\chi$ and $M_{b'}\models \lozenge_\mathit{up}\neg\lozenge_s\chi$. By the induction hypothesis, $A'$ and $B'$ are indistinguishable, so $M_{a'}\models \lozenge_\mathit{up}\lozenge_s\chi\wedge \lozenge_\mathit{up}\neg\lozenge_s\chi$. But then there are distinguishable grid points one step to the north of $A'$, contradicting the induction hypothesis. From this contradiction, we prove the induction step and thereby the claim.
    
    \vspace{10pt}
    Similar reasoning shows that any two grid points $A, B$ that are labeled $(i+k,j)$ by virtue of being $k$ steps to the right of the same grid point $X$ are indistinguishable. Now, we can prove the next claim.

    \vspace{10pt}
    \textbf{Claim 2:} Let $X$, $A$ and $B$ be grid points, where $X$ is labelled $(i,j)$, $A$ is labelled $(i+1,j+1)$ by virtue of being above $A'$ which is to the right of $X$, and $B$ is labelled $(i+1,j+1)$ by virtue of being to the right of $B'$ which is above $B$. Then $A$ and $B$ are PALC-indistinguishable.

    \textbf{Proof of claim 2:} Suppose towards a contradiction that for some $\chi\in \mathsf{PALC}$ and $a \in A$ we have $M_a\models \chi$, while $M_b\not\models \chi$ for all $b\in B$. Then for $x\in X$ we have $M_x\models [\mathsf{centre}\rightarrow \lozenge_s\chi](\lozenge_\mathit{right}\lozenge_\mathit{up}\lozenge_s\mathsf{centre} \wedge \lozenge_\mathit{up}\lozenge_\mathit{right}\neg\lozenge_s\mathsf{centre})$, contradicting $M_x\models \mathit{right\&up}$.
    
    \vspace{10pt}
    From Claim 1 it follows that any $A$ and $B$ that are labelled $(i,j)$ by virtue of being $i$ steps to the right and then $j$ steps up from $(0,0)$ are PALC-indistinguishable. Claim 2 then lets us commute the ``up'' and ``right'' moves. Any path to $(i,j)$ can be obtained from the path that first goes right $i$ steps then up $j$ steps by a finite sequence of such commutations. Hence any grid points $A$ and $B$ that are labelled $(i,j)$ are PALC-indistinguishable.

    The tile formulas $p_i$, for every $\tau_i\in \mathrm{T}$, are PALC-formulas, so there is exactly one tile $\tau_i$ that is assigned to the grid point $(i,j)$.  
    Furthermore, $\mathit{state\_col}$ then guarantees that each side of a grid point has the colour corresponding to the tile, and $\mathit{adj\_tiles}$ guaranteees that the tile colours match. This shows that if $\Psi_\mathrm{T}$ is satisfiable, then $\mathrm{T}$ can tile $\mathbb{N}\times \mathbb{N}$.
    \end{proof}

\subsection{Encoding the Recurring Tile}
The final formula that is satisfied in a grid model if and only if a given tiling has a tile that occurs infinitely often in the first column would be \[\Psi_{\mathrm{T}} \land \blacksquare_{\{v,s\}}[\blacksquare_{\{h,s\}} \lnot p^\ast] \lnot \Psi_{\mathrm{T}}.\] 
In other words, the recurring tiling problem can be reduced to the APALC-satisfiability problem, where the reduction maps the instance $(\mathrm{T},\tau^\ast)$ of the recurring tiling problem to the satisfiability of $\Psi_{\mathrm{T}} \land \blacksquare_{\{v,s\}}[\blacksquare_{\{h,s\}} \lnot p^\ast] \lnot \Psi_{\mathrm{T}}$.

Intuitively, the formula states that if we remove all rows with the special tile, then our model is no longer a grid. See Figure \ref{fig:infingrid}, where on the left we have a grid with the special grey tile $\tau^\ast$ appearing infinitely often in the first column (every other tile in the first column is grey). Formula $\blacksquare_{\{h,s\}} \lnot p^\ast$ holds only in those squares of the grid that lie on rows without the special tile. Thus, announcing $\blacksquare_{\{h,s\}} \lnot p^\ast$ removes all rows that has the grey tile (see the right part of Figure \ref{fig:infingrid}). Since the grey tile appears infinitely often in the original grid, we have to remove an infinite number of rows after the announcement of $\blacksquare_{\{h,s\}} \lnot p^\ast$, thus ensuring that what is left of the original model is not a grid.

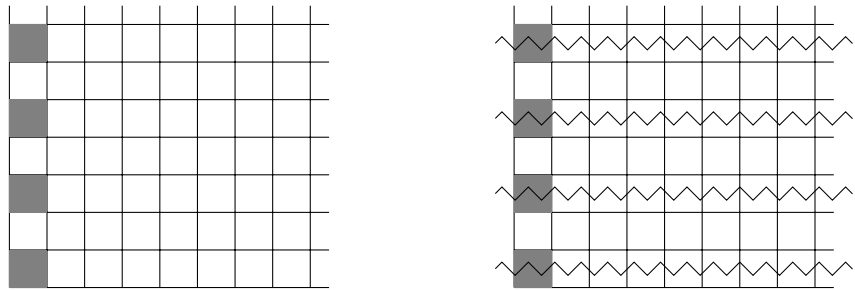
\begin{figure}[h!]
    \centering
    \begin{tikzpicture}
    \draw[step=0.5,black,thin] (0,0) grid (4.25,3.75);
    \node[fill=gray, minimum size=0.5cm ] at (0.25,0.25){};
    \node[fill=gray, minimum size=0.5cm ] at (0.25,1.25){};
    \node[fill=gray, minimum size=0.5cm ] at (0.25,2.25){};
    \node[fill=gray, minimum size=0.5cm ] at (0.25,3.25){};
\end{tikzpicture}
\hspace{2cm}
\begin{tikzpicture}
    \draw[step=0.5,black,thin] (0,0) grid (4.25,3.75);
    \node[fill=gray, minimum size=0.5cm ] at (0.25,0.25){};
    \node[fill=gray, minimum size=0.5cm ] at (0.25,1.25){};
    \node[fill=gray, minimum size=0.5cm ] at (0.25,2.25){};
    \node[fill=gray, minimum size=0.5cm ] at (0.25,3.25){};
    \draw[decorate,decoration={zigzag}] (-.25, .25) -- (4.5, .25);
    \draw[decorate,decoration={zigzag}] (-.25, 1.25) -- (4.5, 1.25);
    \draw[decorate,decoration={zigzag}] (-.25, 2.25) -- (4.5, 2.25);
    \draw[decorate,decoration={zigzag}] (-.25, 3.25) -- (4.5, 3.25);
\end{tikzpicture}
    \caption{Left: An original grid with a special grey tile $\tau^\ast$ appearing infinitely often in the first column. Right: The grid after the public announcement of $\blacksquare_{\{h,s\}}\lnot p^\ast$. Crossed-out rows are not preserved after the announcement. }
    \label{fig:infingrid}
\end{figure}

\begin{theorem}
     Let $\mathrm{T}$ be an instance of the tiling problem with a special tile $\tau^\ast \in \mathrm{T}$. Set $\mathrm{T}$ can tile $\mathbb{N} \times \mathbb{N}$ with $\tau^\ast$ appearing infinitely often in the first column if and only if $\Psi_{\mathrm{T}} \land \blacksquare_{\{v,s\}}[\blacksquare_{\{h,s\}} \lnot p^\ast] \lnot \Psi_{\mathrm{T}}$ is satisfiable.
\end{theorem}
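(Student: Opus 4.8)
The plan is to prove the two implications separately. For the direction from tiling to satisfiability, suppose $\mathrm{T}$ admits a solution in which $\tau^\ast$ occurs infinitely often in the first column. Take the model $M$ built in the proof of Lemma~\ref{lem:there} from this solution; that lemma already gives $M_{(0,0,\mathfrak{e})}\models \Psi_{\mathrm{T}}$, so it remains to check $M_{(0,0,\mathfrak{e})}\models \blacksquare_{\{v,s\}}[\blacksquare_{\{h,s\}}\lnot p^\ast]\lnot\Psi_{\mathrm{T}}$. The states $\sim_{\{v,s\}}$-reachable from $(0,0,\mathfrak{e})$ are exactly the states sitting at a grid point $(0,k)$ of the first column. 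Fix such a state $(0,k,\mathfrak{l})$. If position $(0,k)$ is tiled by $\tau^\ast$ then $p^\ast$ already holds there, so $\blacksquare_{\{h,s\}}\lnot p^\ast$ is false and the announcement formula is vacuously satisfied. Otherwise $\blacksquare_{\{h,s\}}\lnot p^\ast$ holds at $(0,k,\mathfrak{l})$ --- since $\sim_{\{h,s\}}$ only reaches the $\tau^\ast$-free row $k$ --- and announcing it deletes precisely the rows whose first tile is $\tau^\ast$. Because $\tau^\ast$ recurs, infinitely many of these lie above row $k$; letting $k'>k$ be the lowest deleted row, the $\mathsf{north}$-state of the surviving grid point $(0,k'-1)$ is $\sim_{\{h,v,s\}}$-reachable from $(0,k,\mathfrak{l})$ and has lost its unique $v$-successor, so $\mathit{adj\_tiles}$, and hence $\Psi_{\mathrm{T}}$, fails in the announced model. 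This establishes the first implication.

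For the converse I would argue contrapositively. If $\mathrm{T}$ cannot tile $\mathbb{N}\times\mathbb{N}$ at all, Lemma~\ref{lem:and_back} makes $\Psi_{\mathrm{T}}$, and a fortiori the conjunction, unsatisfiable. Otherwise assume $\mathrm{T}$ tiles $\mathbb{N}\times\mathbb{N}$ but no solution has $\tau^\ast$ recurring in the first column, and suppose for contradiction that some pointed model $M_x$ satisfies $\Psi_{\mathrm{T}}\land\blacksquare_{\{v,s\}}[\blacksquare_{\{h,s\}}\lnot p^\ast]\lnot\Psi_{\mathrm{T}}$. By Lemma~\ref{lem:and_back} the $\sim_s$-classes of $M$ form a grid, from which we read off a solution $f$; by assumption $\tau^\ast$ occurs only finitely often in the first column of $f$, and by the standing convention only in the first column at all. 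Let $j_0$ be one more than the highest first-column row tiled by $\tau^\ast$ (or $0$ if there is none), and pick a state $x'$ at the grid point $(0,j_0)$, which is $\sim_{\{v,s\}}$-reachable from $x$. Then $M_{x'}\models[\blacksquare_{\{h,s\}}\lnot p^\ast]\lnot\Psi_{\mathrm{T}}$; using $\mathit{tile\_left}$ together with the assumption that $\tau^\ast$ lives only in the first column, one checks that $M_{x'}\models\blacksquare_{\{h,s\}}\lnot p^\ast$, whence $M^{\blacksquare_{\{h,s\}}\lnot p^\ast}_{x'}\models\lnot\Psi_{\mathrm{T}}$. On the other hand, announcing $\blacksquare_{\{h,s\}}\lnot p^\ast$ deletes only the finitely many rows below $j_0$ that contain $\tau^\ast$, so the submodel generated by $x'$ is a full grid consisting of the rows $j\geq j_0$; I claim this submodel still satisfies $\Psi_{\mathrm{T}}$ at $x'$, which is the desired contradiction.

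The heart of the proof --- and the step I expect to be the main obstacle --- is the verification that this generated submodel $N$ of $M^{\blacksquare_{\{h,s\}}\lnot p^\ast}$ still models $\Psi_{\mathrm{T}}$. The purely modal conjuncts are comparatively easy: $\psi_{\mathit{tile}}$, $\mathit{adj\_tiles}$ and $\mathit{tile\_left}$ survive because entire $\sim_s$-classes are deleted and only rows below $j_0$ disappear (so $\mathsf{north}$- and $\mathsf{east}$-states in $N$ keep their successors, and $p^\ast$ becomes globally false), and $\mathit{init}$ is re-established with the fresh corner $(0,j_0)$, whose $\mathsf{south}$- and $\mathsf{west}$-states now have no outward $v$- resp.\ $h$-successor. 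The genuinely delicate conjuncts are the quantified ones in $\psi_{x\&y}$: one must show that performing \emph{any} further $\mathsf{PALC}$ announcement inside $N$ cannot create two PALC-distinguishable candidates one step up (or right) of a grid point, nor break the commutativity of the up and right moves --- essentially re-running, relative to $N$, the indistinguishability arguments of Lemmas~\ref{lem:there} and~\ref{lem:and_back}. A secondary point requiring care is the claim $M_{x'}\models\blacksquare_{\{h,s\}}\lnot p^\ast$: one has to rule out, using $\mathit{tile\_left}$ and the first-column convention, that some $\sim_{\{h,s\}}$-path from $x'$ leads leftward into a $\tau^\ast$-tiled grid point lying outside the labelled part of the grid.
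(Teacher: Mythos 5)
Your proposal follows essentially the same route as the paper: the forward direction uses the canonical grid model of Lemma~\ref{lem:there} and the observation that announcing $\blacksquare_{\{h,s\}}\lnot p^\ast$ deletes infinitely many rows so that $\mathit{adj\_tiles}$ (hence $\Psi_{\mathrm{T}}$) fails, and the converse is the same contrapositive split into the untileable case and the finitely-recurring case, where the rows above the last $\tau^\ast$-occurrence survive as a full tiled grid witnessing $\lnot\blacksquare_{\{v,s\}}[\blacksquare_{\{h,s\}}\lnot p^\ast]\lnot\Psi_{\mathrm{T}}$. The two points you flag as obstacles are addressed in the paper by first extending the grid-point labelling of Lemma~\ref{lem:and_back} (via $\mathit{no\_change}$, $\mathit{hor}$ and $\mathit{vert}$) so that every $\{h,v,s\}$-reachable grid point carries coordinates in $\mathbb{N}\times\mathbb{N}$, after which the paper, like you, asserts rather than re-verifies that the surviving grid still satisfies $\Psi_{\mathrm{T}}$.
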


\begin{proof}
    First, let us can extend the labelling from the proof of Lemma~\ref{lem:and_back} as follows:
    \begin{itemize}
        \item For every $q\in \mathsf{Position}$, if $A$ and $B$ are grid points, $A$ is labeled $(i,j)$ and there is a $q$ state in $A$ that is $v$ or $h$-indistinguishable from a $q$ state in $B$, then $B$ is labeled $(i,j)$.
    \end{itemize}
    
    It follows from $\mathit{no\_change}$ that this extended labelling retains the property that any two grid points with the same label are PALC-indistinguishable. Furthermore, from $\mathit{hor}$ and $\mathit{vert}$ it follows that every grid point that is reachable by $h$, $v$ and $s$ is now labelled with some coordinates $(i,j)$. Hence we can identify the $\{h, v, s\}$-reachable grid points in any model of $\Psi_\mathrm{T}$ with $\mathbb{N}\times \mathbb{N}$.

    Now, assume that set $\mathrm{T}$ cannot tile the $\mathbb{N} \times \mathbb{N}$ plane with a special tile $\tau^\prime \in \mathrm{T}$ appearing infinitely often in the first column. 
    We argue that in this case, $\Psi_{\mathrm{T}} \land \blacksquare_{\{v,s\}}([\blacksquare_{\{h,s\}} \lnot p^\ast] \lnot \Psi_{\mathrm{T}})$ is not satisfiable. 
    The first conjunct is straightforward. If $\mathrm{T}$ cannot tile the $\mathbb{N} \times \mathbb{N}$ plane, then, by Lemma \ref{lem:and_back}, $\Psi_{\mathrm{T}}$ is not satisfiable. 

    So suppose that $\mathrm{T}$ can tile the plane, but only in such a way that $\tau^\ast$ occurs finitely often. For every model $M_{(0,0,\mathfrak{l})}$ of $\Psi_\mathrm{T}$, there is then some $k\in \mathbb{N}$ that is the last row in which $p^\ast$ is true. The formula $\blacksquare_{\{h,s\}}\neg p^\ast$ holds exactly on those rows where $p^\ast$ does not hold in the first column. As a result, the update $[\blacksquare_{\{h,s\}}\neg p^\ast]$ does not remove any rows past row $k$. The grid points $\mathbb{N}\times \mathbb{N}_{>k}$ then still form a grid that is isomorphic to $\mathbb{N}\times \mathbb{N}$, and that is tiled. See Figure~\ref{fig:fingrid} for a depiction of the situation.

    It follows that 
    $M_{(0,k,\mathfrak{l})}\not\models [\blacksquare_{\{h,s\}}\neg p^\ast]\neg \Psi_\mathrm{T}$, and therefore $M_{(0,0,\mathfrak{l})}\not\models \blacksquare_{\{v,s\}}[\blacksquare_{\{h,s\}}\neg p^\ast]\neg\Psi_\mathrm{T}$.
    This is true for every model of $\Psi_\mathrm{T}$, so $\Psi_\mathrm{T}\land \blacksquare_{\{v,s\}}[\blacksquare_{\{h,s\}}\neg p^\ast]\neg\Psi_\mathrm{T}$ is not satisfiable.

    If, on the other hand, $\mathrm{T}$ can tile the plane in such a way that $\tau^\ast$ occurs infinitely often in the first column, then there is a model of $\Psi_\mathrm{T}$ where the modality $[\blacksquare_{\{h,s\}}\neg p^\ast]$ removes infinitely many rows, and therefore does not leave any infinite grid. So $\Psi_\mathrm{T}\wedge \blacksquare_{\{v,s\}}[\blacksquare_{\{h,s\}}\neg p^\ast]\neg \Psi_\mathrm{T}$ is satisfiable.
    \end{proof}

\begin{figure}[h!]
    \centering
    \begin{tikzpicture}
    \draw[step=0.5,black,thin] (0,0) grid (4.25,3.75);
    \node[fill=gray, minimum size=0.5cm ] at (0.25,0.25){};
    \node[fill=gray, minimum size=0.5cm ] at (0.25,1.25){};
    \node[fill=gray, minimum size=0.5cm ] at (0.25,1.75){};
\end{tikzpicture}
\hspace{2cm}
\begin{tikzpicture}
    \draw[step=0.5,black,thin] (0,0) grid (4.25,3.75);
    \node[fill=gray, minimum size=0.5cm ] at (0.25,0.25){};
    \node[fill=gray, minimum size=0.5cm ] at (0.25,1.25){};
    \node[fill=gray, minimum size=0.5cm ] at (0.25,1.75){};
    \draw[decorate,decoration={zigzag}] (-.25, .25) -- (4.5, .25);
    \draw[decorate,decoration={zigzag}] (-.25, 1.25) -- (4.5, 1.25);
    \draw[decorate,decoration={zigzag}] (-.25, 1.75) -- (4.5, 1.75);
    \draw[step=0.5,black, very thick] (0,1.99) grid (4.25,3.75);
\end{tikzpicture}
    \caption{Left: An original grid with a special grey tile $
    \tau^\ast$ appearing finitely often in the first column. Right: The grid after the public announcement of $\blacksquare_{\{h,s\}}\lnot p^\ast$. Crossed-out rows are not preserved after the announcement. A full $\mathbb{N} \times \mathbb{N}$ grid that is still available after the announcement is depicted with thick lines.}
    \label{fig:fingrid}
\end{figure}
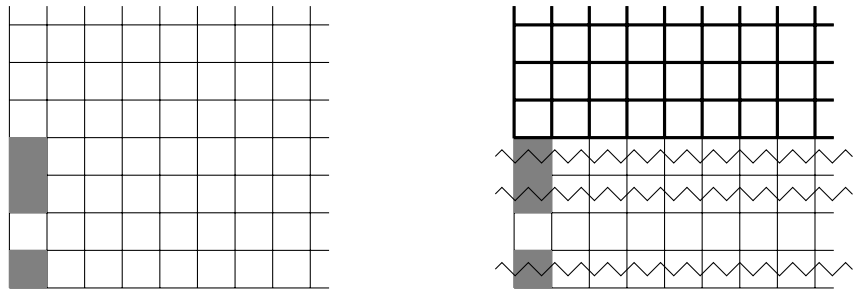

In the construction of $\Psi_{\mathrm{T}}$ and proofs of Lemmas \ref{lem:there} and \ref{lem:and_back}, we used APALC quantifiers $[!]$. We can prove the similar results for GALC and CALC quantifers by substituting $[!]$ with $[\{h,v,s\}]$ and $[ \! \langle \{h,v,s\} \rangle \! ]$ correspondingly, and substituting $\mathsf{PALC}$ with $\mathsf{PALC}^{\{h,v,s\}}$.  We get the hardness result from the $\Sigma^1_1$-completeness of the recurring tiling problem \cite{harel86}.

\begin{theorem}
The satisfiability problem of QPALCs is $\Sigma^1_1$-hard.
\end{theorem}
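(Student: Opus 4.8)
The plan is to read off $\Sigma^1_1$-hardness from the reduction already established in the preceding theorem, combined with the fact that the recurring tiling problem is $\Sigma^1_1$-complete \cite{harel86}. The first step is to observe that the map $(\mathrm{T},\tau^\ast)\mapsto \Psi_{\mathrm{T}}\wedge\blacksquare_{\{v,s\}}[\blacksquare_{\{h,s\}}\lnot p^\ast]\lnot\Psi_{\mathrm{T}}$ is computable: $\Psi_{\mathrm{T}}$ is built from finitely many Boolean combinations indexed by the finite input data ($C$, $\mathrm{T}$, $\mathsf{Position}$), and the colour clauses $\tau_i(q)$ are obtained by inspecting the finite tile tables. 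By the preceding theorem this computable map is a many-one reduction of the recurring tiling problem to $\mathsf{APALC}$-satisfiability, and since the former is $\Sigma^1_1$-hard so is the latter.

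The remaining work is to transfer the construction to $\mathsf{GALC}$ and $\mathsf{CALC}$. Here I would replace, in each of $\mathit{up}$, $\mathit{right}$, $\mathit{right\&up}$, $\mathit{up\&right}$ and $\mathit{no\_change}$, the operator $[!]$ by $[\{h,v,s\}]$ for $\mathsf{GALC}$ and by $[\!\langle\{h,v,s\}\rangle\!]$ for $\mathsf{CALC}$, and correspondingly let the witnessing announcements range over $\mathsf{PALC}^{\{h,v,s\}}$ rather than over $\mathsf{PALC}$. Since $\{h,v,s\}$ is the full agent set, the observation following Definition~\ref{def:semantics} makes $[\{h,v,s\}]\varphi$ and $[\!\langle\{h,v,s\}\rangle\!]\varphi$ equivalent, so both new cases reduce to re-checking that Lemmas~\ref{lem:there} and~\ref{lem:and_back} (hence the preceding theorem) still go through when $[!]$ quantifies only over group announcements $\bigwedge_{i\in\{h,v,s\}}\square_i\psi_i$. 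Lemma~\ref{lem:there} needs no change, since restricting a universal quantifier only makes its scope easier to satisfy. For Lemma~\ref{lem:and_back} I would revisit each step that derives a contradiction by announcing a specific formula and replace that formula by an equivalent group announcement: for instance the announcement ``$\mathsf{centre}\rightarrow\lozenge_s\chi$'' used in Claims~1 and~2 becomes the conjunct $\square_v(\mathsf{centre}\rightarrow\lozenge_s\chi)$ (or $\square_h(\dots)$), i.e.\ the group announcement with $\psi_v=\mathsf{centre}\rightarrow\lozenge_s\chi$ and $\psi_h=\psi_s=\top$; using $\mathit{hor}$, $\mathit{vert}$ and $\mathit{no\_change}$ to control the local shape of the grid, such an announcement prunes precisely the $\mathsf{centre}$-states it needs to and yields the same contradictions with $\mathit{up}$ and $\mathit{right\&up}$ as before.

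The main obstacle is exactly this last point: checking that every announcement used in the proof of Lemma~\ref{lem:and_back} can be realised, with the same relevant effect, by a member of $\mathsf{PALC}^{\{h,v,s\}}$, i.e.\ of the form $\bigwedge_i\square_i\psi_i$. Everything else --- computability of the reduction, the collapse of the two grand-coalition modalities, and the fact that shrinking the quantifier range preserves the soundness direction --- is routine. Once the adaptation is done, the preceding theorem holds verbatim for $\mathsf{GALC}$ and $\mathsf{CALC}$ as well, so each of the three satisfiability problems admits a computable reduction from the $\Sigma^1_1$-complete recurring tiling problem and is therefore $\Sigma^1_1$-hard.
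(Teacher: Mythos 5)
Your proposal follows essentially the same route as the paper: invoke the reduction from the $\Sigma^1_1$-complete recurring tiling problem established by the preceding theorem for $\mathsf{APALC}$, then transfer to $\mathsf{GALC}$ and $\mathsf{CALC}$ by replacing $[!]$ with the grand-coalition quantifiers and restricting the range to $\mathsf{PALC}^{\{h,v,s\}}$. You are in fact somewhat more careful than the paper, which asserts the substitution works without verifying that the announcements used in Lemma~\ref{lem:and_back} can be recast in the form $\bigwedge_i\square_i\psi_i$; your sketch of that verification is a reasonable filling-in of the same argument.
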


The $\Sigma^1_1$-hardness of the satisfiability problems of QPALCs together with the fact that the class of $\Sigma^1_1$ problems is strictly greater than the class of co-RE problems \cite[Chapter 4]{odifreddi89} imply that the sets of validites of the logics are not RE, which, in turn, implies that QPALCs are not finitely axiomatisable.

\begin{corollary}
The set of valid formulas of QPALCs is neither RE nor co-RE.
\end{corollary}

\begin{corollary}
\label{cor:nonfin}
    QPALCs do not have finitary axiomatisations.
\end{corollary}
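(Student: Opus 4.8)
The plan is to derive the non-existence of a finitary axiomatisation directly from the non-recursive-enumerability of the validities, which is precisely the content of the preceding corollary. Recall that a \emph{finitary axiomatisation} of a logic is a proof system consisting of a decidable set of axioms (equivalently, finitely many axiom schemata, each with a decidable set of instances) together with finitely many inference rules, each of which is a decidable relation on tuples of formulas. For any such system, the set of derivable formulas is recursively enumerable: a Turing machine can systematically enumerate all finite sequences of formulas, decidably check whether each is a correct derivation, and output the conclusions of those that are. If, moreover, the proof system is sound and complete with respect to the $S5$-semantics of the relevant QPALC, then its set of theorems coincides with the set of valid QPALC formulas, and hence the set of validities would be recursively enumerable.

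First I would isolate exactly this implication: sound and complete finitary axiomatisation $\Rightarrow$ theorems $=$ validities $\Rightarrow$ validities RE. Then I would invoke the previous corollary, which states that the set of valid QPALC formulas is not RE (indeed, not even co-RE). That corollary is itself an immediate consequence of the $\Sigma^1_1$-hardness of QPALC-satisfiability established just above, together with the strict inclusion of the recursively enumerable sets within the $\Sigma^1_1$ sets: a $\Sigma^1_1$-hard set cannot be RE, and since validity is (up to negation) the complement of satisfiability, the validities cannot be RE either. Putting the two observations together contradicts the assumption that a sound and complete finitary axiomatisation exists. Therefore QPALCs admit no finitary axiomatisation.

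There is essentially no genuine obstacle here beyond care with the definitions; the real work was already discharged in establishing $\Sigma^1_1$-hardness. The one point deserving attention is to state ``finitary axiomatisation'' in sufficient generality — a decidable set of axioms plus finitely many decidable inference rules — so that the proof-search enumeration argument applies to every reasonable formulation of such a system, and the conclusion is not an artefact of an overly narrow notion of proof. It is worth remarking, for contrast, that this leaves open the existence of \emph{infinitary} axiomatisations, such as those already known for APALC and GALC.
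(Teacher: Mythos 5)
Your proposal is correct and follows essentially the same route as the paper: the paper derives the corollary directly from the preceding one (validities not RE, itself a consequence of $\Sigma^1_1$-hardness of satisfiability and the strict inclusion of co-RE in $\Sigma^1_1$), observing that any sound and complete finitary axiomatisation would make the set of validities recursively enumerable. The only difference is that you spell out the standard proof-search enumeration argument, which the paper leaves implicit.
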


\section{Discussion}
\label{sec:concl}
The existence of finitary axiomatisations of any of APAL, GAL, and CAL is a long-standing open problem. In this paper, we have showed that the satisfiability problem of the logics extended with common knowledge modality is $\Sigma^1_1$-hard, and thus they do not admit of finitary axiomatisations. Table \ref{table:results} contains the overview of the known results, including those shown in this paper, and open questions.
\begin{table}[h!]
\centering
\begin{tabular}{l*{6}{c}}
              & APAL & GAL & CAL & APALC & GALC  & CALC \\
\hline
Finitary axiomatisation & ? & ? & ? & \ding{55} (Cor. \ref{cor:nonfin}) & \ding{55} (Cor. \ref{cor:nonfin}) & \ding{55} (Cor. \ref{cor:nonfin})   \\
Infinitary axiomatisation& \ding{52}\cite{balbiani08} & \ding{52}\cite{agotnes10} & ? & \ding{52}\cite{agotnes23} &  \ding{52}\cite{agotnes23} & ?  \\
\end{tabular}
\caption{Overview of the known results and open problems.}
\label{table:results}
\end{table}

It is important to point out that the use of common knowledge is instrumental in our construction. Arguments from \cite{french08,agotnes16} did not rely on common knowledge to enforce local grid properties globally, and instead the authors used an agent with the universal relation over the set of states. This approach is good enough if one wants to demonstrate the existence of a grid-like model. However, if we also require that the grid satisfies some property, like a special tile occurring infinitely often in the first column, then the presence of the global agent makes it harder to ensure this. The problem is that such an unrestrained relation may access other grids within the same model, and thus we 
may end up in the situation when the property is satisfied by a set of grids taken together and not by any single grid. 

Our construction is `tighter' than those in \cite{french08,agotnes16}. In particular, our \textit{v}ertical and \textit{h}orizontal agents can `see' only one step ahead. This guarantees that we stay within the same grid. In order to force grid properties globally, we use common knowledge operators that allow us to traverse a given grid-like model in all directions. It is not yet clear how to have a `tight' grid and still be able to traverse the model without common knowledge. 
With this work, apart from showing that QPALCs are $\Sigma^1_1$-hard,
we also hope to have elucidated the exact obstacle one has to overcome in order to claim the same about QPALs. 

\subsection*{Acknowledgements}
We would like to thank the three anonymous reviewers for their encouraging comments and constructive suggestions, which helped us to improve the presentation of our result. 

\bibliographystyle{eptcs} 
\bibliography{theultimatebibliography.bib}
\end{document}